\DeclareMathAlphabet{\mathpzc}{OT1}{pzc}{m}{it}
\newtheorem{theorem}{Theorem}
\newtheorem{definition}{Definition}
\patchcmd{\blfootnote}{\@mpfootnotetext}{\@gobble}{}{}
\newtheoremstyle{mypropositionstyle} 
  {\topsep}                     
  {\topsep}                     
  {\itshape}                    
  {}                            
  {\bfseries}                   
  {.}                           
  {.5em}                        
  {\thmname{#1}\thmnumber{ #2}\thmnote{ (#3)}}
\theoremstyle{conjecturestyle}
\title{A Mechanistic Framework for \textit{\textbf{in Silico}} \textbf{Optimization} of Neuroblastoma Chemo-Immunotherapy}
\author[1, *]{Kate Brockman}
\author[3, *]{Brian Colburn}
\author[2, *]{Joseph Garza}
\author[3, *]{Yidong Liao}
\author[3, *]{B. Veena S.  N. Rao }
\affil[*]{All authors have contributed equally.}
\affil[1]{Department of Engineering, Texas A\&M University - Corpus Christi, TX- 78412, USA}
\affil[2]{Department of Life Sciences, Texas A\&M University - Corpus Christi, TX- 78412, USA}
\affil[3]{Department of Mathematics \& Statistics, Texas A\&M University-Corpus Christi, TX- 78412, USA}
\affil[ ]{\textit{E-mail address:} \texttt{bv.rao@tamucc.edu}}
\date{}
\begin{document}

\maketitle  
\begin{abstract}
A critical need exists for optimal therapeutic strategies for neuroblastoma, a prevalent and often fatal pediatric solid malignancy. To address the demand for quantitative models that can guide clinical decision-making, a novel mathematical framework was developed. Combination therapies involving immunotherapy, such as Interleukin-2 (IL-2), and chemotherapy, exemplified by Cyclophosphamide, have shown significant clinical potential by enhancing anti-tumor immune responses. In this study, a nonlinear system of coupled ordinary differential equations was formulated to mechanistically describe the interactions among tumor cells, natural killer (NK) cells, and cytotoxic T lymphocytes (CTLs). The pharmacodynamic effects of both IL-2 and Cyclophosphamide on these key immune populations were explicitly incorporated, allowing for the simulation of tumor dynamics across distinct patient risk profiles. The resulting computational framework provides a robust platform for the \textit{\textbf{in silico}} \textbf{optimization} of therapeutic regimens, presenting a quantitative pathway toward the improvement of clinical outcomes for patients with neuroblastoma.
\end{abstract}

\vspace{.1in}

\textbf{Key words.} {Neuroblastoma, Mathematical Modeling, Combination Therapy, Immunotherapy, Tumor-Immune Dynamics, In Silico Optimization } 
 
\section{Introduction}
Neuroblastoma is a significant challenge in pediatric oncology, accounting for approximately 8\% of all childhood cancers and 15\% of related mortality~\cite{roy2017, smith2018}. Clinical outcomes are highly variable and depend on a standardized risk stratification system established by the International Neuroblastoma Risk Group (INRG). This system is crucial for optimizing treatment efficacy while minimizing long-term adverse effects, classifying patients based on factors like age, tumor stage, histology, and genetic markers (Table~\ref{tab:risk_categories}). Patients in the {low-risk} group are typically presented with localized tumors and have excellent prognoses, with 5-year survival rates of between 95--100\%; subsequent treatment is often limited to observation or surgical resection~\cite{franjic2024, anderson2022}. The {intermediate-risk} group has a survival rate exceeding 90\% and is generally treated with a combination of chemotherapy and surgery~\cite{mueller2009, franjic2024}. In contrast, {high-risk} patients face a more guarded prognosis, with 5-year survival rates between 45--50\%~\cite{franjic2024}. These patients receive intensive multimodal therapy that includes aggressive chemotherapy, surgery, radiation, and, increasingly, immunotherapy~\cite{arceci2009, mueller2009}. The integration of immunotherapy into these high-risk protocols underscores the critical need to better understand the complex interactions between the immune system and tumor cells to develop more effective therapeutic strategies for treating Neuroblastoma~\cite{anderson2022}.

The human anti-tumor immune response is orchestrated by two distinct yet coordinated systems: innate and adaptive immunity~\cite{getz2005}. The innate system, featuring Natural Killer (NK) cells, provides a rapid, non-specific first line of defense against malignant cells. In contrast, the adaptive system, mediated by cells such as {Cytotoxic T Lymphocytes (CTLs)}, develops a highly specific, antigen-dependent response that confers long-term immunological memory~\cite{loose2009}. This functional dichotomy is central to immuno-oncology, as the timing and nature of cell activation dictate therapeutic efficacy. Therefore, a comprehensive pharmacological model must incorporate both NK and CTL populations to accurately capture the complementary roles and temporal dynamics of the complete anti-tumor immune cascade~\cite{raulet2004}.

Therapeutic regimens for neuroblastoma are increasingly designed to leverage this dual immune architecture through combination strategies. {Cyclophosphamide}, a conventional cytotoxic chemotherapy agent, primarily acts by inducing apoptosis in rapidly dividing tumor cells, thereby reducing the overall tumor burden~\cite{menard2008}. Concurrently, {Interleukin-2 (IL-2)}, a potent cytokine, serves as an immunotherapeutic agent that stimulates enhanced cytotoxic activity of both NK cells and CTLs~\cite{berthold2000}. The strategic combination of these agents is designed to create a synergistic effect: chemotherapy reduces tumor load, while immunotherapy amplifies the host's innate and adaptive capacity to eliminate remaining malignant cells.

The complex, nonlinear dynamics arising from the interplay between tumor growth, dual-component immune response, and combination therapy necessitate the use of mathematical modeling~\cite{bellomo1997,benzekry2014,kuznetsov1994,padder2023,depillis2008,quaranta2005,song2018}. Mathematical models have been shown to be indispensable tools in oncology, providing quantitative frameworks which integrate complex biological processes and simulate treatment outcomes~\cite{quaranta2005}. Specifically in immuno-oncology, these models allow for the systematic investigation of therapeutic synergies, the optimization of dose-scheduling, and the formulation of patient-specific treatment strategies~\cite{katt2016}. By serving as \textit{in silico} platforms, they enable the exploration of testable hypotheses and the prediction of disease evolution under various therapeutic scenarios. While challenges related to biological complexity and clinical validation persist, the model developed herein aims to address these by providing a mechanistic framework to elucidate the principles governing neuroblastoma response to chemo-immunotherapy~\cite{quaranta2005, katt2016}.

In this study, a mechanistic mathematical model was developed to investigate the complex dynamics of neuroblastoma under combination chemo-immunotherapy. The model, formulated as a nonlinear system of coupled ordinary differential equations, explicitly describes the tripartite interactions among the neuroblastoma tumor cell population, Natural Killer (NK) cells of the innate immune system, and Cytotoxic T Lymphocytes (CTLs) of the adaptive immune system. Notably, a key feature within this framework lies in the integration of the pharmacodynamic effects of two distinct therapeutic agents: the cytotoxic chemotherapy drug Cyclophosphamide, which directly reduces tumor burden, and the immunomodulatory cytokine Interleukin-2 (IL-2), which stimulates the proliferation and activity of immune effector cells. This comprehensive model was then utilized as an in silico laboratory to simulate tumor progression and regression across clinically relevant patient subgroups, accounting for the different dynamics expected in low, intermediate, and high-risk disease profiles. Ultimately, this work provides a robust computational platform for systematically analyzing and optimizing treatment regimens, offering a quantitative approach to designing more effective, synergistic therapeutic strategies with the goal of improving clinical outcomes for children diagnosed with neuroblastoma.

\section{Foundations of tumor growth modeling}

The construction of a robust mathematical framework to model neuroblastoma requires a foundational understanding of the canonical frameworks that describe tumor proliferation. The selection of an appropriate growth function is a critical first step, as it establishes the dynamic core upon which the complex interactions with the immune system and therapeutic agents are built. The models reviewed below, frequently cited in oncological literature, represent a hierarchy of increasing biological realism and were systematically evaluated in the development of our framework~\cite{bellomo1997,benzekry2014,kuznetsov1994,padder2023,depillis2008,quaranta2005,song2018}.

\paragraph{Exponential Growth.} The most common and fundamental model used for modeling population dynamics is exponential growth, which assumes cells proliferate at a constant rate, $r$, in an environment with unlimited resources. This framework accurately describes the initial phase of tumor development, where spatial and nutrient constraints are negligible~\cite{benzekry2014}.
\begin{equation}
    \frac{dT}{dt} = rT \label{eq:exponential_growth}
\end{equation}

\paragraph{Logistic Growth.} In contrast to the unbounded nature of exponential growth, the logistic model introduces the concept of a carrying capacity, $K$, representing the maximum tumor size an environment can sustain. Here, the relative growth rate decreases linearly as the tumor population $T$ approaches $K$, effectively modeling the competition for limited space and nutrients that characterizes later-stage tumor progression~\cite{benzekry2014}.
\begin{equation}
    \frac{dT}{dt} = rT\left(1 - \frac{T}{K}\right) \label{eq:logistic_growth}
\end{equation}

\paragraph{Gompertz Growth.} The Gompertz model offers a refinement of density-dependent inhibition, featuring a characteristic exponential decay of the specific growth rate over time. Additionally, while also incorporating a carrying capacity, the Gompertzian curve is asymmetric, which captures that empirically observed phenomenon where solid tumor growth slows more rapidly than predicted by the logistic model. Its strong validation against experimental data has made it a standard in the field~\cite{benzekry2014}.
\begin{equation}
    \frac{dT}{dt} = rT \ln\left(\frac{K}{T}\right) \label{eq:gompertz_growth}
\end{equation}

\paragraph{Von Bertalanffy and Power Law Growth.} Further advancements are grounded in biophysical principles. The von Bertalanffy model posits that net growth arises from the balance between metabolic anabolic-based (synthesis) and catabolic (degradation) processes, which scale allometrically with tumor size~\cite{benzekry2014}. Similarly, the power-law model is often justified by geometric arguments, assuming that only a fraction of cells---typically those near the surface with access to vasculature---are actively proliferating, leading to a declining growth fraction as the tumor expands.
\begin{equation}
    \text{von Bertalanffy:} \quad \frac{dT}{dt} = p T^{m} - q T^{n} \qquad \text{Power Law:} \quad \frac{dT}{dt} = r T^{p} \label{eq:advanced_growth}
\end{equation}

The subsequent development of the presented neuroblastoma model drew upon the principles of these established frameworks. A customized model was engineered to simulate the unique dynamics of neuroblastoma under pharmacological intervention, refined through an iterative process of parameterization, stability analysis, and validation against known biological behaviors. This multi-faceted formulation process, which integrates the biological basis with the model architecture and its subsequent numerical analysis, is conceptually summarized in Figure~\ref{fig:model-formulation}.

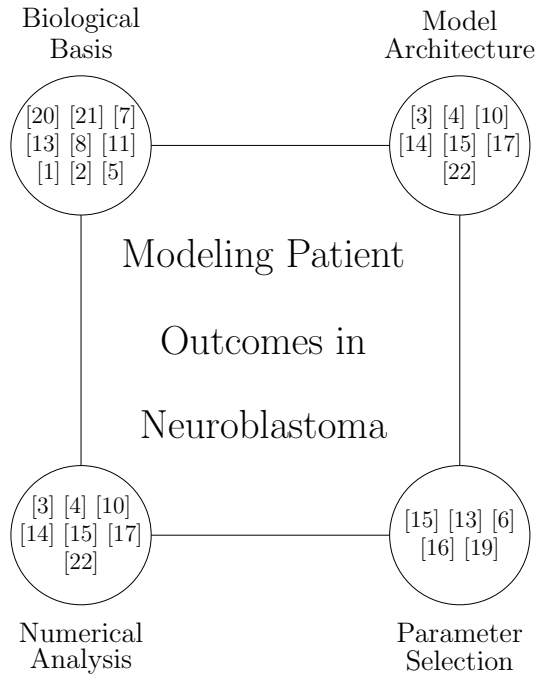
\begin{figure}[H]
    \centering
    \resizebox{0.5\textwidth}{!}{%
        \begin{circuitikz}
            \tikzstyle{every node}=[font=\Large]
            \draw  (3.5,10.25) circle (1.25cm);
            \node [font=\normalsize] at (3.5,10.75) {[20] [21] [7]};
            \node [font=\LARGE] at (12.75,4.25) {};
            \node [font=\normalsize] at (3.5,10.25) {[13] [8] [11]};
            \node [font=\normalsize] at (3.5,9.75) {[1] [2] [5]};
            \draw  (10.25,10.25) circle (1.25cm);
            \node [font=\normalsize] at (10.25,10.75) {[3] [4] [10]};
            \node [font=\normalsize] at (10.25,10.25) {[14] [15] [17]};
            \node [font=\normalsize] at (10.25,9.75) {[22]};
            \draw  (3.5,3.25) circle (1.25cm);
            \node [font=\normalsize] at (3.5,3.75) {[3] [4] [10]};
            \node [font=\normalsize] at (3.5,3.25) {[14] [15] [17]};
            \node [font=\normalsize] at (3.5,2.75) {[22]};
            \draw  (10.25,3.25) circle (1.25cm);
            \node [font=\normalsize] at (10.25,3.5) {[15] [13] [6]};
            \node [font=\normalsize] at (10.25,3) {[16] [19]};
            \node [font=\LARGE] at (6.75,8.25) {Modeling Patient};
            \node [font=\LARGE] at (6.75,6.75) {Outcomes in};
            \node [font=\LARGE] at (6.75,5.25) {Neuroblastoma};
            \draw [short] (3.5,9) -- (3.5,4.5);
            \draw [short] (4.75,3.25) -- (9,3.25);
            \draw [short] (10.25,4.5) -- (10.25,9);
            \draw [short] (9,10.25) -- (4.75,10.25);
            \node [font=\large] at (3.5,12.5) {Biological};
            \node [font=\large] at (3.5,12) {Basis};
            \node [font=\large] at (10.25,12.5) {Model};
            \node [font=\large] at (10.25,12) {Architecture};
            \node [font=\large] at (3.5,1.5) {Numerical};
            \node [font=\large] at (3.5,1) {Analysis};
            \node [font=\large] at (10.25,1.5) {Parameter};
            \node [font=\large] at (10.25,1) {Selection};
        \end{circuitikz}
    }
    \caption{A conceptual overview of the tumor-pharmacology model formulation process, integrating four key domains: the biological basis, the mathematical architecture, parameter selection, and numerical analysis.}
    \label{fig:model-formulation}
\end{figure}


\section{Mathematical modeling and analysis}

This section details the mathematical framework developed to investigate neuroblastoma dynamics under combination chemo-immunotherapy. We first present the core dimensional model, which builds upon established principles of tumor-immune interaction by incorporating key pharmacological agents. We then describe the nondimensionalization of the system, which simplifies the analysis and reveals the key parameters governing model behavior. Finally, we outline the numerical methods used for simulation.

\subsection{The chemo-immunotherapy model}

The foundation of our model is a system of coupled ordinary differential equations (ODEs) inspired by the tumor-immune interaction framework of Song et al.~\cite{song2022}. While their work captured baseline immune surveillance, our model introduces critical extensions by explicitly integrating the pharmacodynamics of {Interleukin-2 (IL-2)} immunotherapy and {Cyclophosphamide} chemotherapy, agents central to modern neuroblastoma treatment regimens. The resulting five-population system, given by Eq.~\eqref{eq:Tumor-pharmacology}, describes the time-evolution of Natural Killer (NK) cells ($N$), Cytotoxic T Lymphocytes (CTLs) ($L$), tumor cells ($T$), and the concentrations of IL-2 ($I$) and Cyclophosphamide ($C$).

\begin{equation}
\left\{
\begin{aligned}
N'(t) &= a_1 N(t)(1 - bN(t)) - a_2 N(t) - \alpha_1 N(t) T(t) + k_i I(t) \\
L'(t) &= r_1 N(t) T(t) - \mu L(t) - \beta_1 L(t) T(t) \\
T'(t) &= c T(t) (1 - d T(t)) - \alpha_2 N(t) T(t) - \beta_2 L(t) T(t) - k_c C(t) \\
I'(t) &= -\frac{\ln(2)}{h_i} I(t) \\
C'(t) &= -\frac{\ln(2)}{h_c} C(t)
\end{aligned}
\right.
\label{eq:Tumor-pharmacology}
\end{equation}

\paragraph{Cellular Dynamics.} The interactions between the aforementioned populations are defined as follows. The {NK cell population} ($N$) exhibits logistic growth (rate $a_1$, capacity $1/b$), natural decay (rate $a_2$), and depletion through interaction with tumor cells (rate $\alpha_1$). The {CTL population} ($L$) is activated in response to NK-tumor interactions (rate $r_1$), undergoes natural decay (rate $\mu$), and can become exhausted through sustained engagement with the tumor (rate $\beta_1$). The {tumor cell population} ($T$) follows a logistic growth pattern (rate $c$, capacity $1/d$) and is subject to killing by both NK cells (rate $\alpha_2$) and CTLs (rate $\beta_2$).

\paragraph{Pharmacological Dynamics.} The concentrations of IL-2 ($I$) and Cyclophosphamide ($C$) are modeled by standard first-order exponential decay, governed by their respective half-lives, $h_i$ and $h_c$. These pharmacokinetic profiles are coupled to the cellular dynamics to model their pharmacodynamic effects. IL-2 provides an external stimulus to the NK cell population (term $+k_i I(t)$), enhancing the innate immune response. Conversely, Cyclophosphamide exerts a direct cytotoxic effect on the tumor (term $-k_c C(t)$), representing the impact of chemotherapy. A comprehensive list of all model parameters, their descriptions, and units is provided in Table~\ref{tab:model_parameters}.

\begin{table}[H]
 	\centering
 	\captionsetup{justification=centering, singlelinecheck=false}
 	\caption{Tumor-pharmacology model parameters.}
 	\label{tab:model_parameters}
 	\small
 	\renewcommand{\arraystretch}{1.2}
 	\begin{tabular}{|l|p{8.5cm}|l|}
 		\hline
 		{Parameter} & {Description} & {Units} \\
 		\hline
 		$a_1$ & NK cell growth rate & $\text{cell} \cdot \text{day}^{-1}$ \\
 		$a_2$ & NK cell natural death rate & $\text{day}^{-1}$ \\
 		$b$ & Carrying capacity coefficient for NK cells & $\text{cell}^{-1}$ \\
 		$c$ & Tumor cell intrinsic growth rate & $\text{day}^{-1}$ \\
 		$d$ & Carrying capacity coefficient for tumor cells & $\text{cell}^{-1}$ \\
 		$\alpha_1$ & Rate of NK cell death from tumor interaction & $\text{cell}^{-1} \cdot \text{day}^{-1}$ \\
 		$\alpha_2$ & Rate of NK-induced tumor death (killing rate) & $\text{cell}^{-1} \cdot \text{day}^{-1}$ \\
 		$\beta_1$ & Rate of CTL death from tumor interaction & $\text{cell}^{-1} \cdot \text{day}^{-1}$ \\
 		$\beta_2$ & Rate of CTL-induced tumor death (killing rate) & $\text{cell}^{-1} \cdot \text{day}^{-1}$ \\
 		$\mu$ & CTL cell natural death rate & $\text{day}^{-1}$ \\
 		$r_1$ & Activation rate of CTLs from NK-tumor interaction & $\text{cell}^{-1} \cdot \text{day}^{-1}$ \\
 		$k_c$ & Efficacy of Cyclophosphamide-mediated tumor death & $\text{mg}^{-1} \cdot \text{day}^{-1} \cdot \text{cell}$ \\
 		$k_i$ & Efficacy of IL-2-mediated NK cell stimulation & $\text{IU}^{-1} \cdot \text{day}^{-1} \cdot \text{cell}$ \\
 		$h_i$ & Half-life of IL-2 & $\text{day}$ \\
 		$h_c$ & Half-life of Cyclophosphamide & $\text{day}$ \\
 		$I(0)$ & Initial dose of IL-2 & IU \\
 		$C(0)$ & Initial dose of Cyclophosphamide & mg \\
 		\hline
 	\end{tabular}
\end{table}

\subsection{Nondimensionalization and dynamic regimes}

To reduce the number of free parameters and elucidate the key mechanisms controlling the system's behavior, we nondimensionalized the model in Eq.~\eqref{eq:Tumor-pharmacology}. This process recasts the system in terms of dimensionless variables and composite parameters, facilitating a more general analysis of the underlying tumor-immune dynamics. The resulting nondimensional system is given by:
\begin{equation}
\left\{
\begin{aligned}
N'(t) &= p_1 N(t) (1 - q N(t)) - p_2 N(t) - N(t) T(t)  + k_i I(t)\\
L'(t) &= N(t) T(t) - L(t) - s L(t) T(t) \\
T'(t) &= u T(t) (1 - v T(t)) - N(t) T(t) - \delta L(t) T(t) - k_c C(t)
\end{aligned}
\right.
\label{eq:Nondimensional}
\end{equation}
where the drug concentration dynamics for $I(t)$ and $C(t)$ remain as in the original system. The thirteen dimensionless parameters are composites of the original seventeen, representing key biological ratios. For instance, $p_1 = a_1/\mu$ and $p_2 = a_2/\mu$ normalize the NK cell kinetics relative to the CTL decay rate, while $u=c/\mu$ scales the tumor growth rate. The parameter $\delta = \beta_2/\alpha_2$ represents the relative killing efficacy of CTLs compared to NK cells, and $s = \beta_1/\alpha_1$ is the ratio of CTL to NK cell exhaustion rates.

A theoretical analysis of this nondimensional framework reveals three characteristic dynamical regimes that correspond to distinct clinical outcomes:
\begin{itemize}
    \item {Tumor Elimination.} This optimal outcome occurs when the combined immune and therapeutic pressures overwhelm the tumor's proliferative capacity. The system converges to a tumor-free state, with the tumor population decaying towards zero.
    \item {Chronic Equilibrium.} A state of balance can be reached where the tumor is controlled but not eliminated by the immune system, resulting in a stable, non-zero tumor burden. This regime is often characterized by damped oscillations as the predator-prey-like dynamics between immune and tumor cells settle into equilibrium.
    \item {Tumor Escape.} This regime corresponds to treatment failure, where tumor proliferation outpaces the combined effects of immune surveillance and therapy. The tumor population grows uncontrollably, leading to unbounded disease progression.
\end{itemize}

\subsection{Numerical methods}

To explore the dynamic regimes associated with specific treatment scenarios, the system of ODEs was solved numerically. Simulations were implemented with \textsf{Python} using the \texttt{SciPy} library's integration routines. An adaptive-step solver was employed to ensure both accuracy and computational efficiency across the different timescales present in the model, from rapid drug decay to slower cellular population dynamics. This computational approach allowed for extensive exploration of the parameter space and a systematic comparison of treatment strategies across different patient risk profiles. A fractional-order version of the model was also used to incorporate memory effects into the immune-tumor interactions, with analysis focused on proving the existence and uniqueness of solutions.

This section outlines the computational and analytical methods used to investigate the neuroblastoma model. First, we describe the numerical simulations performed on the complete chemo-immunotherapy model to explore treatment outcomes. Second, we present a detailed theoretical analysis of a simplified, treatment-free version of the model using the framework of fractional calculus to prove that the underlying tumor-immune system is mathematically well-posed.

\subsection{Numerical simulation of the pharmacological model}
Numerical simulations of the full chemo-immunotherapy model (Eq.~\eqref{eq:Tumor-pharmacology}) were performed to investigate treatment efficacy across different clinical scenarios. Parameter values were adapted from existing literature (Table~\ref{tab:model_parameters}). At the same time, the initial conditions for tumor, NK, and CTL cell populations were set to reflect the characteristic cellular compositions of low, intermediate, and high-risk neuroblastoma. By running simulations for each risk category, we systematically assessed the model's dynamic behavior and predict treatment effectiveness under varied initial biological conditions.

\subsection{Theoretical analysis of a fractional-order tumor-immune model}
To capture the non-local, history-dependent nature of biological processes---often termed "memory effects"---we extended our analysis to a fractional-order version of the core tumor-immune interaction model. Standard integer-order derivatives are local in time, whereas fractional-order derivatives incorporate information from the entire history of a system. This analysis focuses on a simplified, treatment-free version of the model to rigorously establish its mathematical stability and coherence.

\paragraph{The Caputo fractional derivative.} We employ the Caputo fractional derivative, which is well-suited for initial value problems in physical systems. It is defined based on the Riemann-Liouville fractional integral.

\begin{definition}[Riemann-Liouville Integral]
For a function $\varphi(t)$ where $t>0$, the fractional integral of order $\alpha > 0$ is given by:
\begin{equation}
 I^\alpha \varphi(t) = \frac{1}{\Gamma(\alpha)} \int_{0}^{t} (t-\tau)^{\alpha-1} \varphi(\tau) \, d\tau
\end{equation}
where $\Gamma(\cdot)$ is the Gamma function. The kernel $(t-\tau)^{\alpha-1}$ weights the past values of the function $\varphi(\tau)$, giving the operator its memory property.
\end{definition}

\begin{definition}[Caputo derivative]
The Caputo fractional derivative of order $\alpha$ for a function $\varphi(t)$, where $n-1 < \alpha < n$, is defined as:
\begin{equation}
 D^\alpha\varphi(t) = I^{n-\alpha} D^{n}\varphi(t) = \frac{1}{\Gamma(n-\alpha)}\int_{0}^{t}(t-\tau)^{n-\alpha-1}\varphi^{(n)}(\tau)\,d\tau 
 \label{eq:caputo_def}
\end{equation}
where $D^n$ is the standard integer-order derivative.
\end{definition}

\paragraph{The fractional-order model formulation.} We begin with the non-dimensionalized, treatment-free subsystem describing the core interactions between NK cells ($x$), CTLs ($y$), and tumor cells ($z$). The integer-order system is:
\begin{equation}
\left\{
\begin{aligned}
x'(t) &= p_1 x(t) (1 - q x(t)) - p_2 x(t) - x(t) z(t) \\
y'(t) &= x(t) z(t) - y(t) - s y(t) z(t) \\
z'(t) &= u z(t) (1 - v z(t)) - x(t) z(t) - \delta y(t) z(t)
\end{aligned}
\right.
\end{equation}
By replacing the integer-order derivative with the Caputo operator $D^\alpha$ of order $0 < \alpha \leq 1$, we obtain the proposed fractional-order model:
\begin{equation}
\label{eq:fractional_system}
\left\{
\begin{aligned}
D^\alpha x(t) &= p_1 x(t) (1 - q x(t)) - p_2 x(t) - x(t) z(t) \\
D^\alpha y(t) &= x(t) z(t) - y(t) - s y(t) z(t) \\
D^\alpha z(t) &= u z(t) (1 - v z(t)) - x(t) z(t) - \delta y(t) z(t)
\end{aligned}
\right.
\end{equation}

\paragraph{Existence and uniqueness of the solution.} To ensure the model is well-posed, we must prove that a unique solution exists for a given set of initial conditions.

\begin{theorem}
For the system of fractional-order equations given in \eqref{eq:fractional_system}, with $0 < \alpha \leq 1$ and non-negative initial conditions, there exists a unique solution.
\end{theorem}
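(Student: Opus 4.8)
The plan is to recast the system \eqref{eq:fractional_system} as a single vector-valued fractional initial value problem and then to establish existence and uniqueness via the contraction mapping principle applied to the equivalent Volterra integral equation. Writing $X(t) = (x(t), y(t), z(t))^{\top}$ and collecting the three right-hand sides into one map $F(X) = (F_1, F_2, F_3)^{\top}$, the system reads $D^\alpha X(t) = F(X(t))$ with prescribed non-negative initial data $X(0) = X_0$. Because the Caputo derivative of order $0 < \alpha \le 1$ inverts against the Riemann--Liouville integral, applying $I^\alpha$ to both sides converts the problem into the fixed-point equation
\begin{equation}
X(t) = X_0 + \frac{1}{\Gamma(\alpha)} \int_0^t (t-\tau)^{\alpha - 1} F(X(\tau)) \, d\tau =: (\mathcal{T}X)(t),
\end{equation}
so it suffices to prove that $\mathcal{T}$ admits a unique fixed point in a suitable complete metric space.

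First I would fix a terminal time $T > 0$ and a radius $M > 0$, and work on the closed ball $B = \{ X \in C([0,T]; \mathbb{R}^3) : \|X - X_0\|_\infty \le M \}$ inside the Banach space $C([0,T]; \mathbb{R}^3)$ equipped with the supremum norm. On the compact region where each coordinate is bounded by $M + \|X_0\|_\infty$, the components $F_i$ are polynomials of degree two, hence continuously differentiable with bounded Jacobian; this yields a local Lipschitz estimate $\|F(X) - F(\bar X)\| \le L \|X - \bar X\|$ with an explicit constant $L$ depending on $M$, $\|X_0\|_\infty$, and the parameters $p_1, q, p_2, s, u, v, \delta$. Standard bounds on the fractional integral then give $\|\mathcal{T}X - \mathcal{T}\bar X\|_\infty \le \tfrac{L T^\alpha}{\Gamma(\alpha+1)} \|X - \bar X\|_\infty$, together with $\|\mathcal{T}X - X_0\|_\infty \le \tfrac{\|F\|_\infty T^\alpha}{\Gamma(\alpha+1)}$, so that for $T$ small enough $\mathcal{T}$ both maps $B$ into itself and is a strict contraction. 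The Banach fixed-point theorem then delivers a unique continuous solution on $[0,T]$, establishing local well-posedness.

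To promote this to a solution on an arbitrary interval, I would argue by continuation, patching together successive short intervals; the Lipschitz constant stays uniform as long as the trajectory remains in a fixed bounded region, so the argument closes once an a priori bound confines the solution to a forward-invariant box. Here I would use the positivity of the initial data together with the structure of the vector field --- each equation carries either a linear decay term ($-p_2 x$, $-y$) or a logistic saturation term ($p_1 x(1-qx)$, $u z(1-vz)$) --- to rule out finite-time blow-up.

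The main obstacle I anticipate is precisely this global step. The nonlinearity is only \emph{locally} Lipschitz: the quadratic terms $x^2$, $z^2$, $xz$, and $yz$ preclude a global Lipschitz bound, so the contraction argument alone yields only local existence, with the time of existence shrinking as $M$ grows. Making the continuation rigorous requires an a priori bound keeping the solution in a bounded invariant set, and for fractional-order systems this is more delicate than in the integer-order case, since the usual differential-inequality and comparison techniques must be replaced by their fractional analogues (for instance a fractional comparison lemma or the fractional Gr\"onwall inequality). If one is content to assert local existence and uniqueness, this difficulty is sidestepped entirely and the contraction argument suffices on its own.
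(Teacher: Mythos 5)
Your proposal follows essentially the same route as the paper: both recast the Caputo system as the Volterra integral equation $X(t) = X_0 + \frac{1}{\Gamma(\alpha)}\int_0^t (t-\tau)^{\alpha-1}F(X(\tau))\,d\tau$, apply the Banach fixed-point theorem on a short interval, and extend by continuation over successive intervals. In fact you are more careful than the paper, which asserts a global Lipschitz constant for $F$ even though the quadratic terms make $F$ only locally Lipschitz --- precisely the gap you correctly identify as requiring an a priori bound (via a fractional comparison or Gr\"onwall argument) for the continuation step.
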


\begin{proof}
The proof relies on the Banach Fixed-Point Theorem. First, we rewrite the fractional differential system in its equivalent Volterra integral form. Let $Y(t) = [x(t), y(t), z(t)]^T$ be the state vector and $F(Y(t))$ be the vector of functions on the right-hand side of Eq.~\eqref{eq:fractional_system}. Applying the fractional integral operator $I^\alpha$ to both sides gives:
\begin{equation}
Y(t) = Y(0) + I^\alpha F(Y(t)) = Y(0) + \frac{1}{\Gamma(\alpha)} \int_0^t (t-\tau)^{\alpha-1} F(Y(\tau)) \,d\tau
\end{equation}
We define an operator $G$ on a continuous function space as:
\begin{equation}
G[Y(t)] = Y(0) + \frac{1}{\Gamma(\alpha)} \int_0^t (t-\tau)^{\alpha-1} F(Y(\tau)) \,d\tau
\end{equation}
The solution $Y(t)$ is a fixed point of the operator $G$. We prove that $G$ is a contraction mapping. The vector field $F$ is Lipschitz continuous, meaning there exists a constant $L > 0$ such that for any two state vectors $Y_1$ and $Y_2$, $\|F(Y_1) - F(Y_2)\| \leq L \|Y_1 - Y_2\|$.
Consider two solutions, $Y_1(t)$ and $Y_2(t)$, originating from the same initial condition. The difference is:
\begin{align*}
\| G[Y_1(t)] - G[Y_2(t)] \| &= \left\| \frac{1}{\Gamma(\alpha)} \int_0^t (t-\tau)^{\alpha-1} [F(Y_1(\tau)) - F(Y_2(\tau))] \,d\tau \right\| \\
&\leq \frac{1}{\Gamma(\alpha)} \int_0^t (t-\tau)^{\alpha-1} \|F(Y_1(\tau)) - F(Y_2(\tau))\| \,d\tau \\
&\leq \frac{L}{\Gamma(\alpha)} \int_0^t (t-\tau)^{\alpha-1} \|Y_1(\tau) - Y_2(\tau)\| \,d\tau
\end{align*}
Over a finite time interval $[0, T]$, this inequality leads to the condition that for a sufficiently small $T$, the operator $G$ is a contraction. By extending this argument over successive intervals, we can show that a unique solution exists for all $t \geq 0$. This guarantees that the model is deterministic and its behavior is uniquely dictated by its initial state.
\end{proof}

\paragraph{Stability of equilibrium points.} The long-term behavior of the system is determined by the stability of its equilibrium points.

\begin{theorem}
An equilibrium point $Y^*$ of the fractional-order system is locally asymptotically stable if all eigenvalues $\lambda$ of the Jacobian matrix $J(Y^*) = \frac{\partial F}{\partial Y}|_{Y=Y^*}$ satisfy the condition $|\arg(\lambda)| > \frac{\alpha\pi}{2}$.
\end{theorem}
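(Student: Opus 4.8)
The plan is to invoke the standard linearization principle for Caputo fractional-order systems, reducing the nonlinear stability question to a spectral analysis of the constant Jacobian matrix via the asymptotics of the Mittag-Leffler function. This is the fractional analogue of the classical ``stability by linearization'' argument, and the stated sector condition $|\arg(\lambda)| > \alpha\pi/2$ is exactly Matignon's criterion. The whole argument hinges on one analytic fact: the scalar Caputo equation $D^\alpha\eta = \lambda\eta$ has solution governed by $E_\alpha(\lambda t^\alpha)$, and this function decays to zero precisely when $\lambda$ lies in that sector.

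First I would set $\xi(t) = Y(t) - Y^*$ and Taylor-expand the (polynomial, hence $C^\infty$) vector field $F$ of Eq.~\eqref{eq:fractional_system} about $Y^*$, writing $D^\alpha\xi(t) = J(Y^*)\xi(t) + R(\xi(t))$ with $J(Y^*) = \partial F/\partial Y|_{Y^*}$ and remainder $\|R(\xi)\| = o(\|\xi\|)$ as $\xi\to 0$; the constant term vanishes because $Y^*$ is an equilibrium. Next I would analyze the linearized system $D^\alpha\xi = J(Y^*)\xi$: passing to the complex Jordan basis of $J(Y^*)$ decouples it into scalar blocks $D^\alpha\eta = \lambda\eta$ (with nilpotent coupling inside a block), whose solution is $\eta(t) = \eta(0)\,E_\alpha(\lambda t^\alpha)$. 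This identity is verified either by applying $D^\alpha$ term-by-term to the series for $E_\alpha$ or through the Laplace transform $\mathcal{L}\{E_\alpha(\lambda t^\alpha)\}(\sigma) = \sigma^{\alpha-1}/(\sigma^\alpha - \lambda)$.

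The third step uses the large-argument asymptotics of the Mittag-Leffler function: for $|\arg(z)| > \alpha\pi/2$ one has $E_\alpha(z) = -\sum_{k=1}^{m} z^{-k}/\Gamma(1-\alpha k) + O(|z|^{-m-1})$, so $E_\alpha(\lambda t^\alpha) \sim -(\lambda t^\alpha)^{-1}/\Gamma(1-\alpha)\to 0$ algebraically as $t\to\infty$ whenever $|\arg(\lambda)| > \alpha\pi/2$, with Jordan blocks contributing only polynomial-in-$t$ factors that do not overturn the decay. Hence every mode of the linearization vanishes. I would then transfer this to the full nonlinear system: recasting the perturbation equation in Volterra form, convolving against the decaying Mittag-Leffler kernel, and applying a fractional Gronwall inequality shows that the $o(\|\xi\|)$ remainder cannot destabilize the origin once $\|\xi(0)\|$ is sufficiently small, yielding local asymptotic stability of $Y^*$.

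The hard part will be making this final transfer rigorous, since the Hartman--Grobman theorem has no direct fractional counterpart. The central obstacle is that the Mittag-Leffler kernel decays only algebraically (like $t^{-\alpha}$), not exponentially, so the usual exponential-dichotomy bootstrap is unavailable; closing the estimate instead requires a weighted fractional Gronwall inequality to absorb the nonlinear remainder on a small neighborhood of $Y^*$. A secondary subtlety is that the criterion as stated is only \emph{sufficient}: the boundary case $|\arg(\lambda)| = \alpha\pi/2$ is excluded, and I would note that strict inequality is what guarantees the kernel's integrable decay used in the Gronwall step.
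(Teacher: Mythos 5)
The first thing to note is that the paper contains no proof of this theorem at all: it states Matignon's sector criterion as known background and immediately applies it, and the subsequent stability analysis in Section~\ref{sec:stability analysis} only uses the classical integer-order eigenvalue signs at the equilibria $E_0$ and $E_1$, never the fractional condition itself. So there is no in-paper argument to match; your proposal supplies an argument where the paper supplies a citation-level assertion. Your route is the standard and correct one: linearize about $Y^*$, solve the scalar Caputo equation via $E_\alpha(\lambda t^\alpha)$ (the Laplace-transform identity $\sigma^{\alpha-1}/(\sigma^\alpha-\lambda)$ you quote is right), invoke the large-argument expansion of $E_\alpha$ in the sector $|\arg(z)| > \alpha\pi/2$ to get $O(t^{-\alpha})$ decay, and then transfer to the nonlinear system. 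You also correctly identify the two real subtleties: the decay is only algebraic, and the criterion is merely sufficient, with the boundary case $|\arg(\lambda)| = \alpha\pi/2$ excluded.

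Two repairs are needed to make the sketch sound. First, your Jordan-block justification is wrong as reasoned even though the conclusion is right: with decay only like $t^{-\alpha}$, a genuine polynomial factor $t^{k}$ \emph{would} overturn the decay for $0<\alpha<1$, so ``polynomial factors do not overturn the decay'' is an exponential-decay heuristic that does not apply here. What actually saves the argument is a cancellation: a Jordan block of size $m$ produces terms $\frac{t^{k\alpha}}{k!}E_\alpha^{(k)}(\lambda t^\alpha)$, $0 \le k < m$, and differentiating the asymptotic series shows $E_\alpha^{(k)}(z) = O(|z|^{-(k+1)})$ in the sector, so each term is $O(t^{-\alpha})$ --- the apparent growth cancels exactly, and you should state that computation. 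Second, the final nonlinear transfer genuinely cannot be closed by the naive fractional (Henry-type) Gronwall inequality, both because of the slow kernel decay you flag and because Caputo solutions lack the semigroup property, so ``extend over successive intervals'' arguments fail; the rigorous version of your plan is the Lyapunov--Perron fixed-point argument in a weighted function space (Cong, Doan, Siegmund and Tuan, \emph{Linearized asymptotic stability for fractional differential equations}, 2016), which is morally your ``weighted fractional Gronwall'' but must be carried out as a fixed-point construction or cited. With those two repairs your proposal establishes strictly more than the paper does; note also that since $F$ in Eq.~\eqref{eq:fractional_system} is polynomial, the $o(\|\xi\|)$ remainder estimate holds with a locally Lipschitz constant, which is exactly the hypothesis the Lyapunov--Perron theorem requires, and that at $\alpha = 1$ the criterion collapses to the classical Hurwitz condition, consistent with $E_1(z) = e^{z}$.
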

This theorem provides the mathematical criteria to determine whether the system will converge to a state of tumor elimination, chronic equilibrium, or diverge in tumor escape, based on the model parameters evaluated at the system's fixed points. The detailed stability analysis is presented in Section~\ref{sec:stability analysis}.

\subsection{Stability analysis}
\label{sec:stability analysis}

To understand the long-term dynamics of the tumor-immune interaction, a stability analysis of the nondimensionalized system was performed. The equilibrium points (or steady states) of the system were identified, and the stability of each point was determined by analyzing the eigenvalues of the Jacobian matrix evaluated at that point. This method, known as linearization, reveals whether small perturbations from a steady state will decay (stability) or grow (instability). The general symbolic expressions for the non-trivial equilibrium points are algebraically complex; therefore, to gain clear biological insight, we focused our analysis on two key, interpretable steady states: the trivial (cell-free) state and the tumor-free (immune surveillance) state.

The Jacobian matrix, $J$, of the general nondimensionalized system (Eq.~\eqref{eq:Nondimensional}) is given by:
\[
J =
\begin{bmatrix}
p_{1}(1 - 2qN) - p_{2} - T & 0 & -N \\
T & -sT - 1 & N - sL \\
-T & -\delta T & u(1 - 2vT) - N - \delta L
\end{bmatrix}
\]

\paragraph{Trivial Equilibrium (Cell-Free State).}
The first equilibrium point is the trivial, cell-free state, $E_0 = (N^*, L^*, T^*) = (0, 0, 0)$. Evaluating the Jacobian at this point yields:
\[
J(E_0) =
\begin{bmatrix}
p_{1} - p_{2}  & 0 & 0 \\
0 & -1 & 0 \\
0 & 0 & u
\end{bmatrix}
\]
The eigenvalues are read directly from the diagonal: $\lambda_1 = p_1 - p_2$, $\lambda_2 = -1$, and $\lambda_3 = u$. Since the parameter $u$ (the scaled intrinsic tumor growth rate) is always positive, at least one eigenvalue is positive. {Therefore, the trivial equilibrium is always an unstable saddle point.} Biologically, this is expected: in an empty environment, the introduction of even a small number of tumor cells will lead to initial growth, causing the system to move away from the cell-free state.

\paragraph{Tumor-Free Equilibrium (Immune Surveillance State).}
The second key equilibrium represents a state where immune cells exist but the tumor has been eliminated: $E_1 = (N^*, L^*, T^*) = (1/q, 0, 0)$. This state exists provided $p_1 > p_2$. The Jacobian matrix evaluated at this point is:
\[
J(E_1) =
\begin{bmatrix}
-p_{1}  & 0 & -1/q \\
0 & -1 & 1/q \\
0 & 0 & u - 1/q
\end{bmatrix}
\]
This is an upper triangular matrix, so the eigenvalues are again the diagonal entries: $\lambda_1 = -p_1$, $\lambda_2 = -1$, and $\lambda_3 = u - 1/q$. Since $p_1$ is positive, the first two eigenvalues are always negative. The stability of this tumor-free state therefore depends entirely on the sign of $\lambda_3$.
\begin{itemize}
    \item {Case 1: Stable Immune Surveillance ($u < 1/q$).} If the scaled tumor growth rate $u$ is less than the NK cell carrying capacity $1/q$, then $\lambda_3$ is negative. All eigenvalues are negative, and the equilibrium $E_1$ is a {stable sink}. This condition represents a robust state of immune surveillance, where the NK cell population is sufficiently strong to eliminate any nascent tumor cells, keeping the system tumor-free.
    
    \item {Case 2: Immune Escape ($u > 1/q$).} If the tumor growth rate $u$ exceeds the NK cell carrying capacity $1/q$, then $\lambda_3$ becomes positive. With one positive eigenvalue, the equilibrium $E_1$ is an {unstable saddle point}. This condition represents a critical threshold for tumor escape. The immune system is no longer capable of controlling tumor growth, and any small perturbation will cause the tumor population to grow, driving the system towards a state of active disease.
\end{itemize}
This threshold behavior, where stability switches as a parameter crosses a critical value, is characteristic of a transcritical bifurcation and highlights the delicate balance between tumor proliferation and immune control. A summary of these findings is presented in Table~\ref{tab:eigenvalue_results}.

\begin{table}[H]
 	\centering
 	\captionsetup{justification=centering, singlelinecheck=false}
 	\caption{Stability conditions for the tumor-free equilibrium point $E_1$.}
 	\label{tab:eigenvalue_results}
 	\renewcommand{\arraystretch}{1.5}
 	\begin{tabular}{|l|l|p{0.4\textwidth}|} 
 		\hline
 		{Condition} & {Stability} & {Biological Interpretation} \\
 		\hline
 		$u < 1/q$ & Stable Sink & NK cell capacity exceeds tumor growth. {Immune Surveillance succeeds.} \\
 		\hline
 		$u > 1/q$ & Saddle Point (Unstable) & Tumor growth overwhelms NK cell capacity. {Tumor Escape occurs.} \\
 		\hline
 	\end{tabular}
\end{table}

\section{Results}
Simulations produced a total of 27 graphs illustrating the important aspects of immune cell dynamics and neuroblastoma tumor progression under various in silico treatments. These graphical visualizations revealed critical trends, including the temporal changes in tumor growth and immune cell populations, and demonstrated the relative efficacy of Cyclophosphamide and IL-2 with respect to control groups. Notably, the combined therapeutic strategy consistently outperformed single-agent applications, highlighting the potential benefit of multi-drug treatment approaches. 

\subsection{Parameter selection}
The initial conditions and parameters used in the model were informed by the interactions among tumor progression, drug concentration, and patient population, and were tailored to disease stage, with modifications adapted from the literature \cite{depillis2008}. Our approach to modeling the immunotherapeutic dynamics of neuroblastoma was grounded in the International Neuroblastoma Risk Group (INRG) staging system \cite{mueller2009}, which significantly enhanced our ability to compare patient populations in the context of therapeutic interventions.

\begin{table}[H]
    \centering
    \captionsetup{justification=raggedright, singlelinecheck=false}
    \makebox[\textwidth][l]{\hspace{1.5cm}\parbox{\textwidth}{
        \caption{Disease stages derived by the International Neuroblastoma Risk Group (INRG).}
        \label{tab:risk_categories}
    }}
    \small
    \renewcommand{\arraystretch}{1.45}
    \begin{tabular}{|l|p{10cm}|}
        \hline
        Disease Stage & Description \\
        \hline
        Low Risk & Patients with L1 (localized tumors in one area) or MS (asymptomatic with favorable biology and metastases limited to skin, liver, or bone marrow) are considered low risk. These patients typically require observation, with surgery or chemotherapy only if symptoms arise. \\
        Intermediate Risk & L2 (regional tumors with IDRFs) and MS with unfavorable biology (e.g., diploidy) are classified as intermediate risk. These tumors may need chemotherapy, with surgery recommended if possible. \\
        High Risk & M (distant metastases), MS with MYCN amplification, or L2 in patients over 18 months with unfavorable features are high risk. These patients require aggressive treatment, including chemotherapy, surgery, and stem cell therapy. \\
        \hline
    \end{tabular}
\end{table}

In this study, we employed a three-category system to stratify patient populations into low, intermediate, and high-risk groups based on the tumor stages described in Table \ref{tab:risk_categories}. This system supported the comparative analyses across trials and informed our initial conditions for the developed mathematical model. By aligning with an internationally accepted risk framework, we enhanced the clinical relevance and reproducibility of our simulations.

For each risk group, an estimated relative initial abundance of tumor cells, CTLs, and NK cells was made while paying particular attention to capturing the relative population sizes as shown in Table \ref{tab:initial_values}. These immune profiles enabled accurate simulations of treatment outcomes under varying immunological baselines, which allowed for the exploration of how immune systems can be influenced by various therapeutic responses. 

The low-risk patient population was characterized by a low tumor cell count and a highly active immune response, where NK cells offered immediate defense to tumor cells, whilst CTLs provided a targeted and long-term defense compared to NK cells. In the intermediate-risk patient-populations, initial tumor cell count was higher, showcasing an increasingly significant role of CTLs in a long-term immune response. Although NK cells served initially as the first line of defense, the increased tumor burden necessitated a coordinated immune response, with CTLs becoming increasingly critical for targeting and eliminating growing tumor cell populations. In high-risk patient populations, the tumor cell count was elevated, and the immune system faces greater challenges. While NK cells provided an early line of defense and gradually declined over time, CTLs proved to be essential for long-term tumor control. CTLs ability to recognize specific antigens and undergo clonal expansion enabled a sustained immune response against rapidly proliferating tumor cells.

\begin{table}[H]
    \centering
    \captionsetup{justification=raggedright, singlelinecheck=false}
    \makebox[\textwidth][l]{\hspace{3.2cm}\parbox{\textwidth}{
        \caption{Tumor-pharmacology parameter values.}
        \label{tab:model_parameter_values}
    }}
    \small
    \renewcommand{\arraystretch}{1.15}
    \begin{tabular}{|p{8.5cm}|l|}
        \hline
        Parameter & Value \\
        \hline
        NK cell growth rate ($a_1$) & 0.111 \\
        NK cell death rate due to natural death ($a_2$) & 0.0412 \\
        Carrying capacity coefficient for NK cell population ($b$) & 1.02e-09 \\
        Natural tumor cell growth rate ($c$) & 0.514 \\
        Carrying capacity coefficient for tumor cell population ($d$) & 1.02e-09 \\
        Rate of NK cell death due to tumor interaction ($\alpha_1$) & 1e-07 \\
        Rate of NK-induced tumor death ($\alpha_2$) & 3.23e-07 \\
        Rate of CTL-cell death due to tumor interaction ($\beta_1$) & 3.422e-10 \\
        Rate of CTL-induced tumor death ($\beta_2$) & 0.01245 \\
        CTL cell death rate due to natural death ($\mu$) & 0.02 \\
        Rate of NK-lysed tumor cell debris activation of CTLs ($r_1$) & 2.908e-11 \\
        Rate constant of Cyclophosphamide-mediated tumor death ($k_c$) & 0.9 \\
        Rate constant of IL-2-mediated stimulation ($k_i$) & 5e+04 \\
        Half-life of IL-2 ($h_i$) & 5 \\
        Half-life of Cyclophosphamide ($h_c$) & 537 \\
        Dose of IL-2 ($I_0$) & variable \\
        Dose of Cyclophosphamide ($C_0$) & variable \\
        \hline
    \end{tabular}
\end{table}

The model parameters listed in Table \ref{tab:model_parameters} were selected to reflect the biological processes driving neuroblastoma progression, immune system interactions, and the therapeutic effects of IL-2 and Cyclophosphamide. NK cell populations were regulated by intrinsic growth ($a_1$), death ($a_2$), and a carrying capacity constraint ($b$), while tumor-induced cytotoxicity was governed by the interaction rate ($\alpha_1$). CTL dynamics were shaped by activation through NK-lysed tumor byproducts ($r_1$), natural cell death ($\mu$), and tumor-induced depletion ($\beta_1$). Tumor proliferation was defined by the growth rate ($c$) and the carrying capacity ($d$), with immune-mediated reduction captured by NK and CTL-mediated lysis ($\alpha_2$, $\beta_2$).

\begin{table}[H]
    \centering
    \caption{Initial cell counts for low, intermediate, and high risk scenarios.}
    \label{tab:initial_values}
    \small
    \renewcommand{\arraystretch}{1.25}
    \begin{NiceTabular}{|c|c|c|c|}[hlines]
        \CodeBefore
            \rowcolors{1}{lightgray!25}{}[respect-blocks]
        \Body
            \toprule
            \Block{2-1}{Cell Type} & \Block{1-3}{Risk} & & \\
            & Low & Intermediate & High \\
            \midrule
            $N(0)$ & $10^5$ & $10^5$ & $10^4$ \\
            $L(0)$ & 20 & 20 & 20 \\
            $T(0)$ & $8 \times 10^5$ & $10^7$ & $10^7$ \\
            \bottomrule
    \end{NiceTabular}
\end{table}

Drug-related parameters included the rate of IL-2 stimulation of NK cells ($k_i$) and the rate of Cyclophosphamide induced tumor reduction ($k_c$), both of which play important roles in modulating immune response and tumor growth in neuroblastoma. Their concentrations were modeled using first-order exponential decay based on the respective drug half-lives ($h_i$, $h_c$), capturing the transient nature of these agents in systemic circulation. Initial dosages of IL-2 ($I_0$) and Cyclophosphamide ($C_0$) define the dosing schedule for treatment simulations shown in Table \ref{tab:dosing_schedule}.

\subsubsection{Pharmacological schedules}
Dosing schedules for Cyclophosphamide and Interleukin-2 (IL-2) were differentiated by treatment intensity (low-dose and high-dose) and were divided into initial and recurring administrations. For Cyclophosphamide, low-dose treatment involved an initial dose of 2.5~mg/kg followed by recurring doses of 2~mg/kg every 24 hours. High-dose treatment began with an initial dose of 30~mg/kg, with recurring doses of 25~mg/kg administered at 24-hour intervals \cite{emadi2009}. IL-2 was administered on a fixed 24-hour cycle. Low-dose schedules involved an initial and recurring dose of \(3 \times 10^6\) units, while high-dose schedules used \(6 \times 10^6\) units per dose on the same interval \cite{pressey2016}.

\begin{table}[H]
    \centering
    \caption{Dosing schedules for Cyclophosphamide and Interleukin-2.}
    \small
    \renewcommand{\arraystretch}{1.25}
    \begin{NiceTabular}{|c|c|c|c|}[hlines]
    \CodeBefore
        \rowcolors{1}{lightgray!20}{}[respect-blocks]
    \Body
        \toprule
        Drug & Dosage & Initial & Recurring \\
        \midrule
        \Block{2-1}{Cyclophosphamide} & Low & 2.5 mg/kg & 2 mg/kg \\
        & High & 30 mg/kg & 25 mg/kg \\
        \Block{2-1}{IL-2} & Low & -- & $3 \times 10^6$ units \\
        & High & -- & $6 \times 10^6$ units \\
        \bottomrule
    \end{NiceTabular}
    \label{tab:dosing_schedule}
\end{table}

\subsubsection{Therapeutic agent selection}
Two therapeutic agents were selected for modeling neuroblastoma treatment: Cyclophosphamide and IL-2. These drugs were chosen for their complementary mechanisms in which they affect the cellular population \cite{emadi2009,rossi1994}. Cyclophosphamide acts through direct tumor cell cytotoxicity, whilst IL-2 enhances immune system activation. By incorporating both agents into our model, we aimed to quantify how chemotherapy induced tumor cell death and immune-mediated tumor suppression independently and synergistically. This approach allowed for the exploration of direct cytotoxic effects and immune system activation across different stages of patient diseases.

Cyclophosphamide, a chemotherapy drug, directly targets tumor cells by slowing the rate at which they proliferate and inducing apoptosis. Chemically, the drug alkylates DNA, leading to the formation of cross-links that interfere with replication and trigger cell death, particularly in rapidly dividing cells. Beyond its direct cytotoxicity, Cyclophosphamide controls the tumor microenvironment by suppressing immune responses \cite{emadi2009}. Despite its effects, its potent ability to reduce tumor burden makes it a cornerstone in neuroblastoma treatment in modern medicine. 

IL-2, an immune-modulating cytokine, plays a critical role in the activation of NK cells, which are essential for facilitating the innate immune response against tumor cell development. In particular, IL-2 binds to receptors on NK cells, which promotes proliferation, activation, and cytotoxicity.  These activated NK cells then directly eliminate tumor cells by releasing perforin and granzymes, inducing apoptosis \cite{rossi1994}. Lastly, IL-2 enhances the expression of activating receptors in NK cells, improving their ability to recognize and target tumor cells, especially early on in the immune defense. In neuroblastoma treatment, this immune activation supports a shift toward a tumor-targeting immune environment \cite{rossi1994}.

\subsubsection{Baseline dynamics and model validation} 
The developed model was first tested using a set of parameters identified by the stability analysis 
that produced a stable, chronic equilibrium without treatment (Figure~\ref{fig:nondim-lowrisk-case1}). As shown, NK and CTL 
populations (N \& L) initially declined but began to recover by day 5, with NK cell resurgence 
supporting CTL stabilization. By day 15, both populations reached steady-state levels. Although 
the immune response alone was insufficient to eliminate the tumor population (T), it maintained 
the tumor burden below  $10^{4}$ cells. This validates the model’s ability to capture a state of chronic 
disease which provided a baseline for evaluating therapeutic interventions introduced in the 
following section

\begin{figure}[H]
    \centering
    \includegraphics[width=0.85\linewidth]{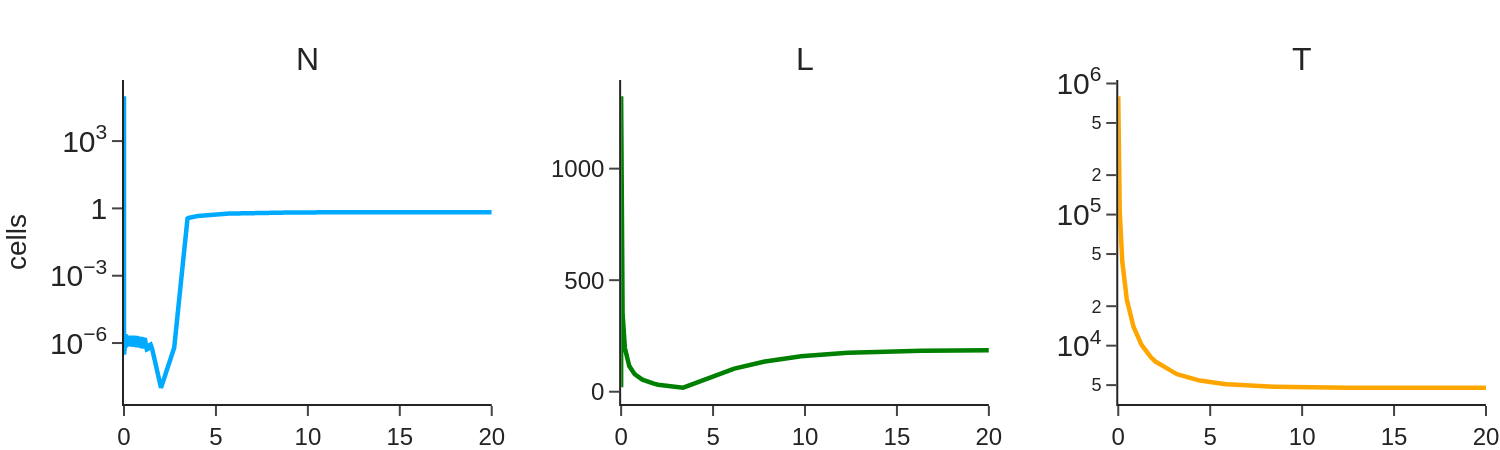}
    \captionsetup{justification=raggedright, singlelinecheck=false}
    \makebox[\textwidth][l]{\hspace{2.8cm}\parbox{\textwidth}{
        \caption{Numerical simulation results for stable parameters.}
        \label{fig:nondim-lowrisk-case1}
    }}
\end{figure}

\vspace{-1em}

\subsection{Simulation of treatment strategies}
Our analysis assessed various combinations of immunotherapy and chemotherapy across three distinct neuroblastoma patient populations stratified by risk level. Simulations were conducted to evaluate how treatment intensity and type influenced therapeutic outcomes within each group.

\section{Simulation methods and setup}

This section details the framework for the numerical simulations, including the basis for patient stratification, the selection of model parameters and initial conditions, and the design of the therapeutic regimens.

\subsection{Patient risk stratification and initial conditions}

Our simulation strategy was grounded in the clinically established International Neuroblastoma Risk Group (INRG) staging system, which classifies patients based on disease severity~\cite{mueller2009}. This approach ensured clinical relevance in our findings. The INRG stages, summarized in Table~\ref{tab:risk_categories}, were consolidated into three distinct risk profiles—low, intermediate, and high—to define the initial state of the system for simulations.

\begin{table}[H]
 	\centering
 	\captionsetup{justification=centering}
 	\caption{Disease stages as defined by the International Neuroblastoma Risk Group (INRG).}
 	\label{tab:risk_categories}
 	\begin{tabularx}{0.9\textwidth}{lX}
 		\toprule
 		{Disease Stage} & {Description} \\
 		\midrule
 		{Low Risk} & Localized tumors (L1) or metastatic disease with favorable biology (MS) confined to skin, liver, or bone marrow. Treatment is often limited to observation or surgery. \\
 		{Intermediate Risk} & Regional tumors with image-defined risk factors (IDRFs) (L2) or MS with unfavorable biology. These cases typically require chemotherapy in addition to surgery. \\
 		{High Risk} & Distant metastatic disease (M), MS with MYCN amplification, or advanced regional tumors in older patients. These cases require aggressive, multimodal therapy. \\
 		\bottomrule
 	\end{tabularx}
\end{table}

For each risk group, distinct initial cell populations for tumor cells ($T(0)$), NK cells ($N(0)$), and CTLs ($L(0)$) were established to reflect the corresponding biological state (Table~\ref{tab:initial_values}). The  \textit{low-risk} scenario was defined by a small tumor burden and a robust immune presence. The \textit{intermediate-risk} scenario featured a larger tumor and a moderately suppressed immune environment. The  \textit{high-risk} scenario was characterized by a very large tumor burden and a severely compromised immune cell count, representing an aggressive disease state where the immune system is overwhelmed.

\begin{table}[H]
 	\centering
 	\caption{Initial cell counts for the three simulated risk scenarios.}
 	\label{tab:initial_values}
 	\begin{tabular}{lccc}
 		\toprule
 		{Cell Type} & {Low Risk} & {Intermediate Risk} & {High Risk} \\
 		\midrule
 		NK cells, $N(0)$ & $10^5$ & $10^5$ & $10^4$ \\
 		CTLs, $L(0)$ & 20 & 20 & 20 \\
 		Tumor cells, $T(0)$ & $8 \times 10^5$ & $10^7$ & $10^7$ \\
 		\bottomrule
 	\end{tabular}
\end{table}

\subsection{Model parameters and therapeutic agents}

The model parameters, adapted from existing literature~\cite{depillis2008}, were chosen to reflect the underlying biological processes (Table~\ref{tab:model_parameter_values}). The therapeutic agents, Cyclophosphamide and Interleukin-2 (IL-2), were selected for their complementary mechanisms of action.

\paragraph{Cyclophosphamide.} A conventional cytotoxic chemotherapy agent, Cyclophosphamide primarily acts by alkylating DNA in rapidly dividing cells, which interferes with replication and induces apoptosis, thereby reducing the overall tumor burden~\cite{emadi2009}. Its potent ability to debulk tumors makes it a cornerstone of neuroblastoma treatment.

\paragraph{Interleukin-2 (IL-2).} An immune-stimulating cytokine, IL-2 plays a critical role in activating the innate immune response. It binds to receptors on NK cells, promoting their proliferation, activation, and cytotoxic function. Activated NK cells then eliminate tumor cells by releasing cytotoxic granules, making IL-2 a key agent for enhancing the anti-tumor immune environment~\cite{rossi1994}.

\begin{table}[H]
    \centering
    \caption{Parameter values used in the tumor-pharmacology model.}
    \label{tab:model_parameter_values}
    \begin{tabular}{llr}
        \toprule
        {Parameter} & {Description} & {Value} \\
        \midrule
        $a_1$ & NK cell growth rate & 0.111 \\
        $a_2$ & NK cell natural death rate & 0.0412 \\
        $b$ & NK cell carrying capacity coefficient & 1.02e-09 \\
        $c$ & Tumor cell intrinsic growth rate & 0.514 \\
        $d$ & Tumor cell carrying capacity coefficient & 1.02e-09 \\
        $\alpha_1$ & Rate of NK cell death from tumor interaction & 1e-07 \\
        $\alpha_2$ & Rate of NK-induced tumor death (killing rate) & 3.23e-07 \\
        $\beta_1$ & Rate of CTL death from tumor interaction & 3.422e-10 \\
        $\beta_2$ & Rate of CTL-induced tumor death (killing rate) & 0.01245 \\
        $\mu$ & CTL cell natural death rate & 0.02 \\
        $r_1$ & Activation rate of CTLs & 2.908e-11 \\
        $k_c$ & Efficacy of Cyclophosphamide & 0.9 \\
        $k_i$ & Efficacy of IL-2 stimulation & 5e+04 \\
        $h_i$ & Half-life of IL-2 (min) & 5 \\
        $h_c$ & Half-life of Cyclophosphamide (min) & 537 \\
        \bottomrule
    \end{tabular}
\end{table}

\subsection{Dosing schedules and simulation scenarios}
To investigate the impact of treatment intensity, we defined low-dose and high-dose schedules for both drugs, administered at 24-hour intervals (Table~\ref{tab:dosing_schedule}). These schedules were systematically applied to each of the three patient risk profiles, creating a matrix of 27 distinct simulation scenarios (Table~\ref{tab:simulation_scenario_matrix}) to comprehensively evaluate the interplay between disease severity and treatment strategy.

\begin{table}[H]
 	\centering
 	\caption{Dosing schedules for Cyclophosphamide and Interleukin-2.}
 	\label{tab:dosing_schedule}
 	\begin{tabular}{llcc}
 		\toprule
 		{Drug} & {Dosage Level} & {Initial Dose} & {Recurring Dose} \\
 		\midrule
 		\multirow{2}{*}{Cyclophosphamide} & Low & 2.5 mg/kg & 2 mg/kg \\
 		& High & 30 mg/kg & 25 mg/kg \\
 		\midrule
 		\multirow{2}{*}{IL-2} & Low & -- & $3 \times 10^6$ units \\
 		& High & -- & $6 \times 10^6$ units \\
 		\bottomrule
 	\end{tabular}
\end{table}

\begin{table}[H]
    \centering
    \captionsetup{justification=raggedright, singlelinecheck=false}
    \caption{Simulation matrix showing 9 applied treatments across three patient populations.}
    \label{tab:simulation_scenario_matrix_fixed}
    \renewcommand{\arraystretch}{1.2} 
    \small
    \begin{tabularx}{\textwidth}{@{} l l >{\raggedright\arraybackslash}X >{\raggedright\arraybackslash}X >{\raggedright\arraybackslash}X @{}}
        \toprule
        \multirow{2}{*}{\textbf{Risk}} & \multirow{2}{*}{\textbf{\makecell{Cyclophosphamide\\Dose}}} & \multicolumn{3}{c}{\textbf{IL-2 Dose}} \\
        \cmidrule(l){3-5} 
        & & \textbf{None} & \textbf{Low} & \textbf{High} \\
        \midrule
        \multirow{3}{*}{Low} & None & Low Risk, No Cyclophosphamide, No IL-2 & Low Risk, No Cyclophosphamide, Low IL-2 & Low Risk, No Cyclophosphamide, High IL-2 \\
        \cmidrule(l){2-5}
        & Low & Low Risk, Low Cyclophosphamide, No IL-2 & Low Risk, Low Cyclophosphamide, Low IL-2 & Low Risk, Low Cyclophosphamide, High IL-2 \\
        \cmidrule(l){2-5}
        & High & Low Risk, High Cyclophosphamide, No IL-2 & Low Risk, High Cyclophosphamide, Low IL-2 & Low Risk, High Cyclophosphamide, High IL-2 \\
        \midrule
        \multirow{3}{*}{Intermediate} & None & Intermediate Risk, No Cyclophosphamide, No IL-2 & Intermediate Risk, No Cyclophosphamide, Low IL-2 & Intermediate Risk, No Cyclophosphamide, High IL-2 \\
        \cmidrule(l){2-5}
        & Low & Intermediate Risk, Low Cyclophosphamide, No IL-2 & Intermediate Risk, Low Cyclophosphamide, Low IL-2 & Intermediate Risk, Low Cyclophosphamide, High IL-2 \\
        \cmidrule(l){2-5}
        & High & Intermediate Risk, High Cyclophosphamide, No IL-2 & Intermediate Risk, High Cyclophosphamide, Low IL-2 & Intermediate Risk, High Cyclophosphamide, High IL-2 \\
        \midrule
        \multirow{3}{*}{High} & None & High Risk, No Cyclophosphamide, No IL-2 & High Risk, No Cyclophosphamide, Low IL-2 & High Risk, No Cyclophosphamide, High IL-2 \\
        \cmidrule(l){2-5}
        & Low & High Risk, Low Cyclophosphamide, No IL-2 & High Risk, Low Cyclophosphamide, Low IL-2 & High Risk, Low Cyclophosphamide, High IL-2 \\
        \cmidrule(l){2-5}
        & High & High Risk, High Cyclophosphamide, No IL-2 & High Risk, High Cyclophosphamide, Low IL-2 & High Risk, High Cyclophosphamide, High IL-2 \\
        \bottomrule
    \end{tabularx}
\end{table}

\section{Results}

The 27 simulation scenarios were executed to quantify the therapeutic effects of Cyclophosphamide and IL-2, both as monotherapies and in combination, across the three neuroblastoma risk profiles. The results are presented below, with a focus on tumor burden over a 25-day period.

\subsection{Low-risk patient group}
In the low-risk scenario, characterized by a small initial tumor, the untreated control case resulted in slow tumor growth (Figure~\ref{fig:low-risk-results}). Low-dose Cyclophosphamide monotherapy provided only a temporary reduction in tumor burden before relapse occurred. In contrast, high-dose Cyclophosphamide or any regimen containing IL-2 proved sufficient to eradicate the tumor. The combination of high-dose chemotherapy and immunotherapy was so effective that it drove the tumor population to zero rapidly, demonstrating strong synergistic potential.

\begin{figure}[H]
    \centering
    \includegraphics[width=0.89\linewidth]{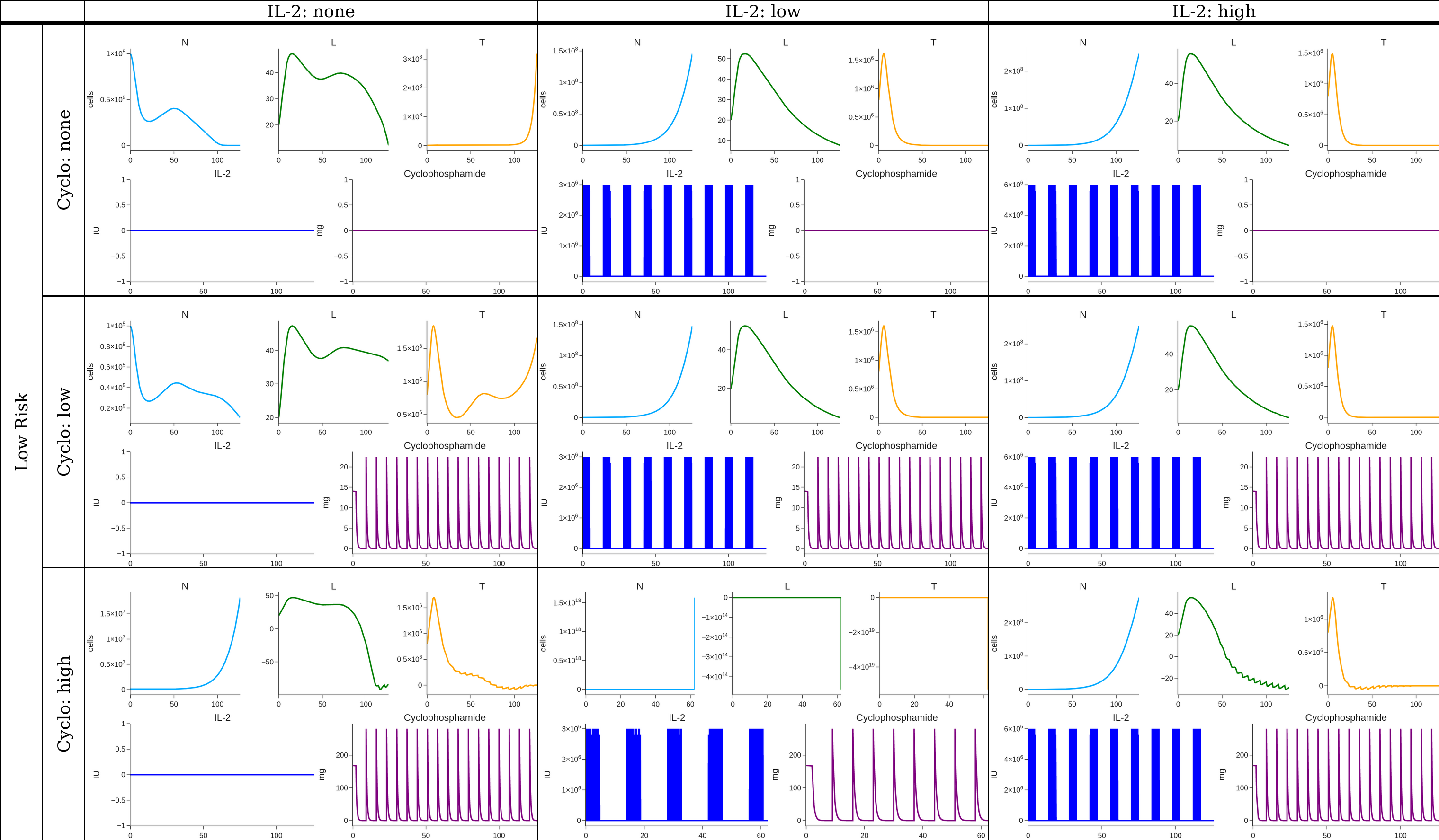}
    \caption{Simulation results for the low-risk patient group under nine treatment combinations.}
    \label{fig:low-risk-results}
\end{figure}

\subsection{Intermediate-risk patient group}
For the intermediate-risk cohort, with its higher initial tumor burden, monotherapy with either low- or high-dose Cyclophosphamide was insufficient to control tumor growth (Figure~\ref{fig:intermediate-risk-results}). The tumor's proliferative capacity quickly overwhelmed the cytotoxic effects of the chemotherapy. However, all treatment regimens that included IL-2 demonstrated a marked therapeutic effect, leading to tumor eradication. This highlights the critical role of immune stimulation in overcoming a more established tumor.

\begin{figure}[H]
    \centering
    \includegraphics[width=0.89\linewidth]{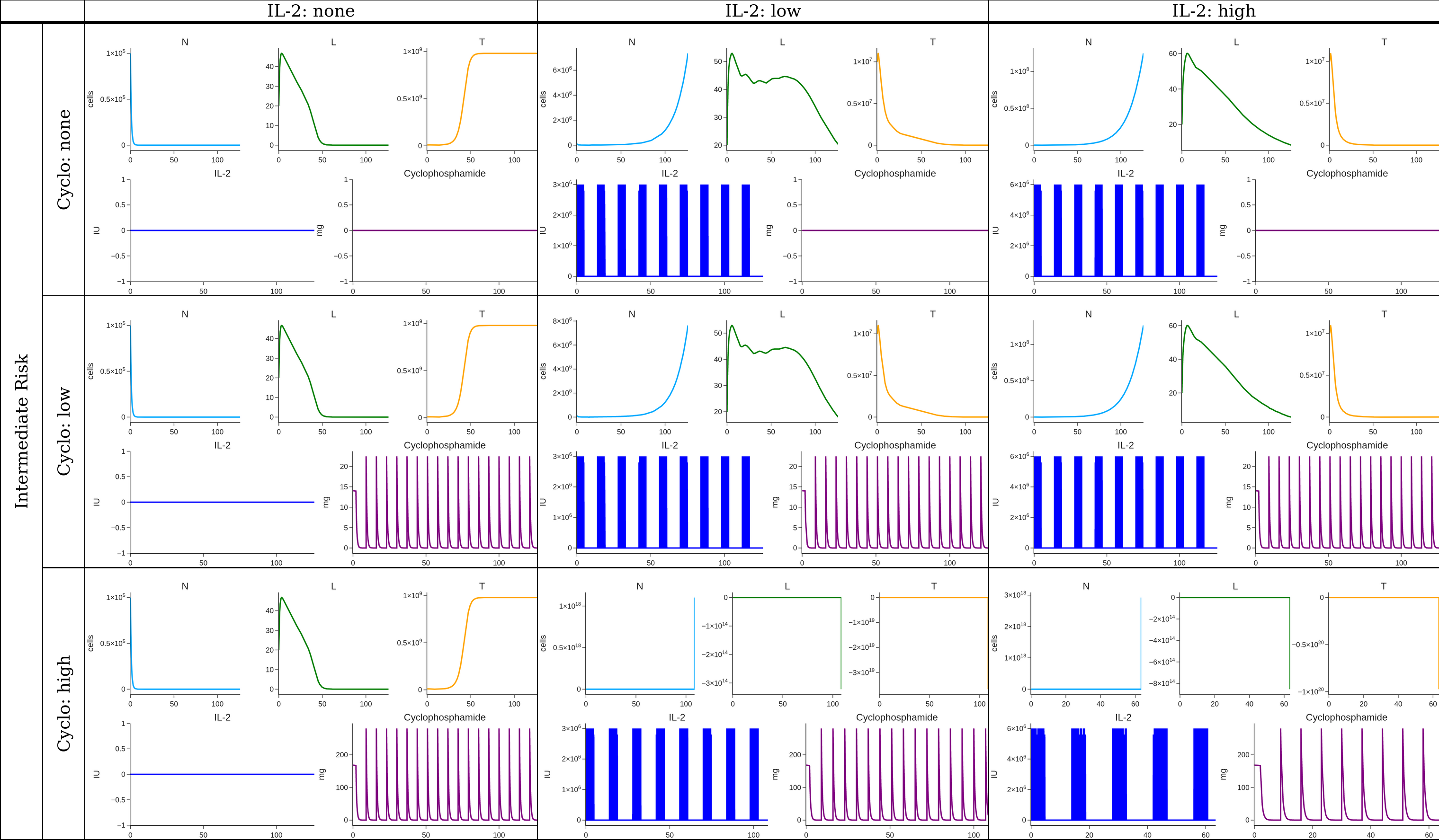}
    \caption{Simulation results for the intermediate-risk patient group.}
    \label{fig:intermediate-risk-results}
\end{figure}

\subsection{High-risk patient group}
The high-risk simulations modeled an aggressive disease state with a severely compromised immune system. In this challenging scenario, none of the tested therapeutic strategies were able to achieve tumor control (Figure~\ref{fig:high-risk-results}). While IL-2 administration produced transient bursts of immune activity, the immune cell populations were quickly exhausted and overwhelmed by the rapid proliferation of the large tumor. Cyclophosphamide monotherapy had no discernible impact on the tumor's growth trajectory. This result suggests that for high-risk disease, the tested regimens are insufficient, and alternative strategies such as higher doses, different scheduling, or additional therapeutic agents may be required.

\begin{figure}[H]
    \centering
    \includegraphics[width=0.89\linewidth]{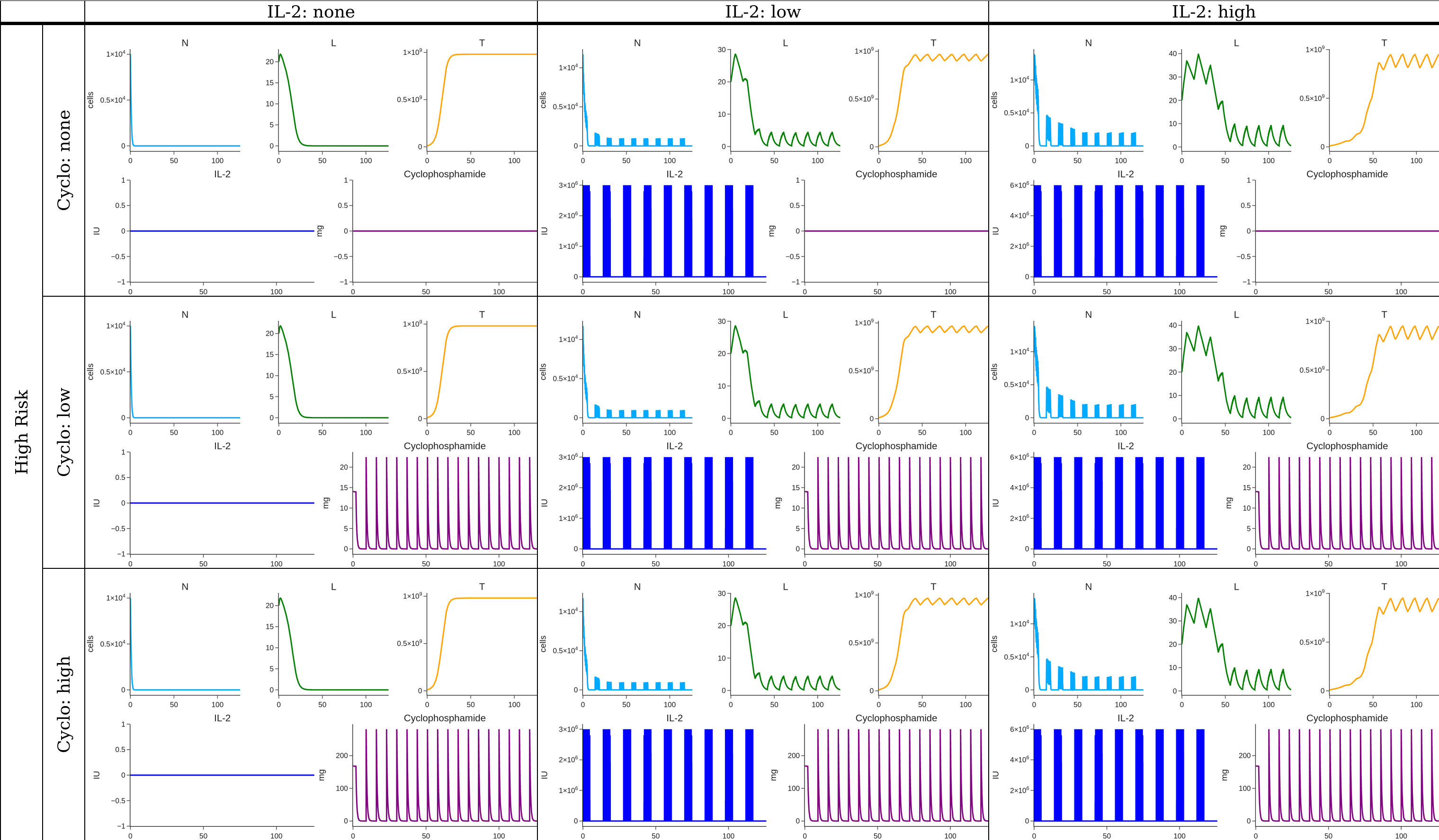}
    \caption{Simulation results for the high-risk patient group, showing treatment failure across all scenarios.}
    \label{fig:high-risk-results}
\end{figure}

\section{Conclusion}
In this study, the complex interplay between neuroblastoma cells, the innate and adaptive immune systems, and combination pharmacotherapy was investigated using a mechanistic mathematical model. Our framework, formulated as a system of coupled ordinary differential equations, successfully simulated the dynamics of tumor progression under various therapeutic scenarios tailored to clinically distinct patient risk profiles. The results underscore the critical role of both NK cells and CTLs in mediating anti-tumor immunity and quantitatively demonstrate that the strategic application of Interleukin-2 and Cyclophosphamide can synergistically enhance tumor elimination, particularly in low- and intermediate-risk conditions.

The findings have significant clinical implications. For low- and intermediate-risk neuroblastoma, our model suggests that immunotherapy, specifically IL-2, is a powerful component of treatment, capable of tipping the balance in favor of immune-mediated tumor control. However, the predicted failure of all tested regimens in the high-risk scenario is a crucial, albeit sobering, result. It suggests that the aggressive proliferation and profound immune suppression characteristic of high-risk disease may create a biological context in which these specific therapeutic agents, at conventional doses, are fundamentally insufficient. This highlights a critical need for novel therapeutic strategies---such as alternative cytokines, checkpoint inhibitors, or cellular therapies like CAR-T---to overcome the inherent resistance of advanced-stage neuroblastoma.

While this study provides valuable insights, it is important to acknowledge its limitations. The model is a deterministic simplification of a highly complex and stochastic biological system and does not account for spatial heterogeneity within the tumor microenvironment. Nonetheless, the presented framework serves as a robust and adaptable \textit{in silico} platform. Future work should focus on integrating patient-specific data to calibrate and personalize the model, exploring optimal dosing schedules and sequencing, and incorporating additional therapeutic agents to identify novel, more effective combination strategies. Ultimately, this research reinforces the power of mathematical modeling as a tool to dissect therapeutic mechanisms, generate testable hypotheses, and guide the rational design of next-generation treatments for pediatric cancer.

\section*{Acknowledgments}
The authors gratefully acknowledge the Texas A\&M High Performance Research Computing center for providing the advanced computing resources used to conduct this research.

\bibliographystyle{plain}
\bibliography{ref}

\end{document}